\documentclass[11pt]{article}
\setlength{\topmargin}{-.25in}    
\setlength{\textheight}{8.5in}      
\setlength{\oddsidemargin}{0in}
\setlength{\evensidemargin}{0in}
\setlength{\textwidth}{6.5in}

\usepackage{amsmath,amssymb,amsthm,easybmat,verbatim,float,hyperref}
\usepackage{enumerate}
\usepackage{color}
\usepackage{bm}
\usepackage{graphicx}
\usepackage{longtable}
\usepackage{array}
\newcolumntype{x}[1]{%
>{\centering\arraybackslash}p{#1}}%

%
\newcommand{\sepAuthor}{0.5in}
\newcommand{\sepAbstract}{0.4in}
\newcommand{\skipKeywords}{30pt}

\long\def\mytitlepage#1#2#3#4{
        \thispagestyle{empty}
        \begin{center}
        {\Large\bf #1}

        \vspace{\sepAuthor}
        #2\\
        \medskip

        \vspace{\sepAbstract}
        {\Large Abstract}
        \end{center}

        \noindent{#3}
        \vskip\skipKeywords

        \noindent{#4}
        \clearpage
        }

\newtheorem{remark}{Remark}

\newtheorem{proposition}{Proposition}

\newcommand{\ket}[1]{|#1\rangle}



\begin{document}
\mytitlepage{Comment on ``Matrix Pencils and Entanglement Classification''\!
\footnote{This work was supported in part by the National Science
Foundation of the United States under Awards~0347078 and 0622033.
}}{
Eric Chitambar\\
Physics Department, University of Michigan\\
450 Church Street, Ann Arbor, Michigan 48109-1040, USA\\
E-mail: echitamb@umich.edu\\
\vspace{2ex}
Carl A. Miller\\
Department of Mathematics, University of Michigan\\
530 Church Street,  
Ann Arbor, MI 48109-1043, USA\\
E-mail: carlmi@umich.edu\\
\vspace{2ex}
Yaoyun Shi\\
Department of Electrical Engineering and Computer Science\\
University of Michigan\\ 
2260 Hayward Street, Ann Arbor, MI 48109-2121, USA\\
E-mail: shiyy@umich.edu
}{\noindent In our earlier posting ``Matrix Pencils and Entanglement Classification'', \href{http://lanl.arxiv.org/abs/0911.1803}{arXiv:0911.1803}, we gave a polynomial-time algorithm for deciding if two states in a space of dimension $2\otimes m\otimes n$ 
are SLOCC equivalent. In this note, we point out that a straightforward modification of the algorithm gives
a simple enumeration of all SLOCC equivalence classes
in the same space, with the class representatives expressed in the Kronecker canonical normal form of matrix pencils.  Thus, two states are equivalent if and only if they have the same canonical form.  As an example, we present representatives in canonical form for each of the 26 equivalence classes in $2\otimes 3\otimes n$ systems.
}{}

We assume that the reader is familiar with our recent posting \href{http://lanl.arxiv.org/abs/0911.1803}{arXiv:0911.1803}.
We will define a canonical form of a state in dimensions $2\otimes m\otimes n$ which we call the \textbf{State Kronecker Canonical Form (SKCF)}.  An approach also based on analyzing pairs of matrices was taken by 
researchers in Ref. \cite{Cheng-2009a}.  There, the authors consider exclusively $2\otimes n\otimes n$ 
systems but unfortunately err in deriving Theorems 1 and 2 and thus miss a 
whole range of equivalence classes.  Here and in our earlier posting, we correct this mistake primarily by identifying linear fractional transformations (LFTs) as an essential ingredient in SLOCC transformations.  The following construction of the SKCF relies heavily on the following fact.

\begin{proposition}\label{prop:lft}
Let $x=(x_1, x_2, x_3)$ and $y=(y_1, y_2, y_3)$ be two triples of distinct values in $\mathbb{C}^*=\mathcal{C}\cup\{\infty\}$.
Then there exists a unique linear fractional transformation $\ell=\ell(a, b, c, d): z\mapsto \frac{az+b}{cz+d}$, where $ac-bd\ne0$,
such that $\ell(x_i)=y_i$, $1\le i\le 3$.
\end{proposition}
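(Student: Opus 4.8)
The plan is to reduce the two-triple problem to a single normalization against the fixed reference triple $(0,1,\infty)$. The key structural observation I would use is that linear fractional transformations (LFTs) are closed under composition and inversion: each LFT $z\mapsto\frac{az+b}{cz+d}$ corresponds to the matrix $\left(\begin{smallmatrix} a & b \\ c & d\end{smallmatrix}\right)$, composition of LFTs matches matrix multiplication, and non-degeneracy (nonzero determinant $ad-bc$) is preserved under products and inverses; hence the LFTs form a group under composition. Granting this, both existence and uniqueness follow from the single special case in which the target triple is $(0,1,\infty)$.

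For existence, I would write down an explicit LFT sending an arbitrary triple $x=(x_1,x_2,x_3)$ of distinct points to $(0,1,\infty)$, namely the cross-ratio map
$$\phi_x(z)=\frac{(z-x_1)(x_2-x_3)}{(z-x_3)(x_2-x_1)},$$
and check directly that $\phi_x(x_1)=0$, $\phi_x(x_2)=1$, $\phi_x(x_3)=\infty$. A short computation gives its determinant as $(x_1-x_3)(x_2-x_1)(x_2-x_3)$, which is nonzero precisely because the $x_i$ are distinct, so $\phi_x$ is a genuine LFT. Constructing $\phi_y$ in the same way from $y$, the composition $\ell=\phi_y^{-1}\circ\phi_x$ is again an LFT and satisfies $\ell(x_i)=y_i$ for each $i$, establishing existence.

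For uniqueness, it suffices to prove that the only LFT fixing three distinct points is the identity: if $\ell_1$ and $\ell_2$ both carry $x$ to $y$, then $\ell_2^{-1}\circ\ell_1$ fixes $x_1,x_2,x_3$ and must therefore be the identity, forcing $\ell_1=\ell_2$. To see the fixed-point claim, note that a finite fixed point $z_0$ of $z\mapsto\frac{az+b}{cz+d}$ satisfies the quadratic $cz_0^2+(d-a)z_0-b=0$; a nonzero quadratic has at most two roots, so three distinct fixed points force $c=0$, $a=d$, and $b=0$, i.e.\ the identity map.

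The main obstacle I anticipate is purely bookkeeping: the careful handling of the point at $\infty$. When one of the $x_i$ or $y_i$ equals $\infty$, the cross-ratio formula must be read as its appropriate limit (dropping the factors involving that coordinate), and in the uniqueness argument a fixed point at $\infty$ corresponds to $c=0$, reducing $\ell$ to an affine map whose two remaining fixed points then force it to be the identity. These cases are routine but need to be stated explicitly so that the determinant and fixed-point computations remain valid across the Riemann sphere.
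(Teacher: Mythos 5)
Your proof is correct and complete: the reduction to the reference triple $(0,1,\infty)$ via the cross-ratio map, the determinant check, and the fixed-point argument for uniqueness (including the case of a fixed point at $\infty$) are all sound. The paper itself offers no proof of this proposition --- it is invoked as a classical fact about linear fractional transformations --- so your argument simply supplies the standard textbook justification. One minor point worth noting: the non-degeneracy condition in the paper's statement, $ac-bd\ne 0$, is evidently a typo for the determinant condition $ad-bc\ne 0$ (the identity map has $ac-bd=0$), and you have correctly worked with the latter throughout.
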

Recall that $\ell(\infty)=a/c$, and $\ell(-d/c)=\infty$. Also recall that LFTs form a group under function composition; in particular
each LFT is reversible and its reverse is also a LFT.

Let $\ket{\Psi}\in2\otimes m\otimes n$. We write
\[ \ket{\Psi}=\ket{0}\otimes(\sum_{i=0, j=0}^{m-1,n-1}\alpha_{ij}\ket{ij})+\ket{1}\otimes(\sum_{i=0, j=0}^{m-1,n-1}\beta_{ij}\ket{ij}).\]
With \[ R=[\alpha_{ij}]_{0\le i\le m-1, 0\le j\le n-1}\quad\textrm{and}\quad
S=[\beta_{ij}]_{0\le i\le m-1, 0\le j\le n-1},\]
the corresponding matrix pencil for $\ket{\Psi}$ is
\[ U=U_\Psi=\mu R + \lambda S.\]
To compute the canonical form $F=F_\Psi$, we first compute the Kronecker canonical form (KCF) of $U$, which
is the direct sum of a set of blocks of the following types. We will use the notation in~\cite{CMS09}, in particular those in Lemma~(1) there.
\begin{enumerate}[Type $1$.]
\item[Type $0$.] A zero matrix $F^0$ of dimension $h\times g$.
\item $L_{\epsilon_1}$, $L_{\epsilon_2}$, ...., $L_{\epsilon_u}$, for some integers $u\ge 0$, and $\epsilon_1,..., \epsilon_u$ with
$0<\epsilon_1\le \epsilon_2\le\cdots\le\epsilon_u$.
\item $L^T_{\nu_1}$, $L^T_{\nu_2}$, ..., $L^T_{\nu_v}$, for some integers $v\ge0$ and $\nu_1, ..., \nu_v$ with $0<\nu_1\le \nu_2\le\cdots\le\nu_v$.
\item A set of blocks of regular pencils, determined by a 
sequence of {\em distinct} eigenvalues $x=(x_1, x_2, ..., x_r)$, $x_1, ..., x_r\in\mathbb{C}^*$, and 
a corresponding sequence $\eta=(\eta^1, \eta^2, ..., \eta^r)$, where
$\eta^i=(\eta^i_1, ..., \eta^i_{r_i})$
is itself a sequence of integers with $0<\eta^i_1\le \eta^i_2\le\cdots\le \eta^i_{r_i}$. We call $\eta^i$ the {\em size signature}
of the eigenvalue $x_i$. Each pair of $(i, j)$, $1\le i\le r$, $1\le j\le r_i$,
determines a regular pencil $R(x_i, \eta^i_j)$, which is either
$M^{\eta^i_j}$ (but with $x_i$ replaced by $-x_i$) when $x_i\ne \infty$ or $N^{\eta^i_j}$ otherwise.
\end{enumerate}
In $F$, we will first arrange the above blocks according to their types (from Type 0 to Type 3).
For blocks of Type 1 and Type 2, we order non-decreasingly in size. 
For blocks of Type 3, we first order the size signatures according to the following ordering.
\begin{itemize}
\item First order, non-decreasingly, according to the multiplicities of the size signatures, i.e., the fewer times
a size signature appears, the earlier it appears in the ordering.
\item For size signatures of the same multiplicity, we order them according to a
fixed total ordering of size signatures, such as the `graded lexicographical order', in which
$(\eta_1, ..., \eta_a)\prec (\eta'_1,...., \eta'_{b})$ if $a < b$ or when $a=b$, the first non-zero elements in $\eta_i-\eta_i'$ is negative. 
\end{itemize}
We will assume from now on that $\eta$ is ordered, and that there are $k$ distinct size signatures,
corresponding to $\zeta_1$, ..., $\zeta_k$ number of distinct eigenvalues, respectively. 
Fix a total ordering of $\mathbb{C}^*$. For a sequence of distinct complex numbers 
$y=(y^1_1, y^1_2, ..., y^1_{\zeta_1}, y^2_1,...,y^2_{\zeta_2},...,y^k_{1},..., y^k_{\zeta_k})$,
define the {\em $\eta$-ordered sequence of $y$} as the unique sequence 
$\omega(y)=(z^1_1, z^1_2, ..., z^1_{\zeta_1}, z^2_1,...,z^2_{\zeta_2},...,z^k_{1},..., z^k_{\zeta_k})$, 
where $z^i_1\prec z^i_2\prec\cdots\prec z^i_{\zeta_i}$ is a permutation of $y^i_1, ..., y^i_{\zeta_i}$.
Two sequences of distinct eigenvalues from $x$, $y=(y_1, ..., y_{l})$ and $z=(z_1, ..., z_{l})$
are said to {\em of the same type} if they are of
the same length and the size signatures for $y_i$ and $z_i$ are the same, for all $i$, $1\le i\le l$.
Note that $\omega(y)$ and $y$ are of the same type for any $y$.

We will describe below how to obtain a sequence $\hat x$ of $r$ distinct eigenvalues, not necessarily the same as those
in $x$, to replace $x$ 
in the output canonical form. If $r\le 3$, set $\hat x$ to be the first $r$ elements of $(0, 1, \infty)$. 
If $r\ge 4$, denote by $\bar X$ the set of triples $\bar x=(x_{i_1}, x_{i_2}, x_{i_3})$ of distinct eigenvalues from $x$
that are of the same type as $(x_1, x_2, x_3)$.
Each such triple determines a unique linear fractional transformation $\theta_{\bar x}$ that maps $\bar x$ to $(0, 1, \infty)$.
Denote by $\hat X=\{\omega(\theta_{\bar x}(x)) : \bar x\in \bar X\}$ and finally,
\[ \hat x=\min \hat X\]
where the minimization is over some fixed ordering on sequences of complex numbers.
Output $(\hat x, \eta)$ as the canonical form for the regular pencil blocks in
the SKCF of $\ket{\Psi}$. This completes the description of the SKCF for $\ket{\Psi}$.

\begin{proposition}\label{prop:stateKCF}
Two states are SLOCC equivalent if and only if they have the same state Kronecker canonical form.
\end{proposition}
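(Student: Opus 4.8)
The plan is to establish that the SKCF is a complete invariant by combining Kronecker's classification of pencils with the extra symmetry supplied by the two-dimensional party. I would first recall from \cite{CMS09} that two states in $2\otimes m\otimes n$ are SLOCC equivalent exactly when their pencils $U_\Psi=\mu R+\lambda S$ are related by strict equivalence $U\mapsto PUQ$ (the invertible actions of the second and third parties) composed with an invertible linear change of the pencil variables (the action of the first party). Concretely, a $2\times2$ invertible operation on the first party sends $(R,S)$ to $(aR+bS,\,cR+dS)$, hence performs the substitution $\mu\mapsto a\mu+c\lambda$, $\lambda\mapsto b\mu+d\lambda$; on the spectral ratio $\mu/\lambda$, and thus on the eigenvalues of the regular part, this is precisely a linear fractional transformation, which is where Proposition~\ref{prop:lft} enters.

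Next I would split the Kronecker data into its singular part (the Type~0 dimensions $h\times g$ and the minimal-index sequences $\epsilon$ and $\nu$) and its regular part (the eigenvalues $x$ carrying their size signatures $\eta$). Since the full KCF is a complete invariant for strict equivalence, it suffices to understand the residual linear change of variables. For the singular part I would argue it is left entirely invariant: an invertible substitution in $(\mu,\lambda)$ can neither create nor destroy the common kernel producing the Type~0 block, and the minimal indices, being degrees of a polynomial basis of the null space, are unaffected by a projective reparametrization of $(\mu:\lambda)$. Hence the canonically ordered Type~0, $L_\epsilon$, and $L^T_\nu$ blocks of two SKCFs agree if and only if the singular parts are strictly equivalent, a condition untouched by the first party. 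This reduces the whole proposition to showing that the construction $\hat x=\min\hat X$ is a complete invariant for labeled eigenvalue configurations modulo a simultaneous, signature-preserving LFT.

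For this I would prove two things. \emph{Invariance}: if $y$ arises from $x$ by an LFT $\ell$ (followed by the canonical reordering), then because $\theta_{\bar x}$ is the unique map sending $\bar x$ to $(0,1,\infty)$, one has $\theta_{\ell(\bar x)}=\theta_{\bar x}\circ\ell^{-1}$, so that $\theta_{\ell(\bar x)}(y)=\theta_{\bar x}(x)$; since $\ell$ preserves size signatures it carries the set of same-type triples of $x$ bijectively onto that of $y$, whence $\hat X$ and therefore $\hat x$ coincide. This gives equal SLOCC class $\Rightarrow$ equal SKCF. \emph{Completeness}: if $x$ and $x'$ produce the same $\hat x$, there are same-type triples $\bar x,\bar x'$ with $\omega(\theta_{\bar x}(x))=\omega(\theta_{\bar x'}(x'))$; as $\omega$ only reorders within signature groups, the two sequences $\theta_{\bar x}(x)$ and $\theta_{\bar x'}(x')$ agree as multisets of (value, signature) pairs, so $\ell=\theta_{\bar x'}^{-1}\circ\theta_{\bar x}$ is an LFT carrying $x$ to $x'$. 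The degenerate case $r\le 3$ is handled by $3$-transitivity of LFTs: any two signature-matched configurations of at most three eigenvalues are LFT-related, so fixing $\hat x$ to the first $r$ of $(0,1,\infty)$ discards no information.

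I expect the main obstacle to be the bookkeeping in the completeness argument for $r\ge 4$: one must verify that restricting to same-type triples and minimizing $\omega(\theta_{\bar x}(x))$ quotients out exactly the LFT freedom, without ever identifying two configurations that differ by a rearrangement violating the signature structure. The crux is that Proposition~\ref{prop:lft} pins $\theta_{\bar x}$ down uniquely from its triple, so the only residual ambiguities are the choice of normalizing triple and the within-group ordering, removed respectively by the minimization over $\hat X$ and by $\omega$. Assembling the pieces, two states are SLOCC equivalent iff their singular parts agree and their eigenvalue configurations are LFT-related, which by the two claims above holds iff their SKCFs are identical.
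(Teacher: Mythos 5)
Your proposal is correct and follows essentially the same route as the paper: the heart of your ``invariance'' step, the identity $\theta_{\ell(\bar x)}=\theta_{\bar x}\circ\ell^{-1}$ derived from the uniqueness in Proposition~\ref{prop:lft}, is exactly the paper's claim $\Theta'=\Theta\theta_0$, and your conclusion $\hat X=\hat X'$ matches its chain $\hat X'=\omega(\Theta'x')=\omega(\Theta x)=\hat X$. The additional material you supply --- the invariance of the singular part under the $(\mu,\lambda)$ substitution and the explicit ``completeness'' argument via $\ell=\theta_{\bar x'}^{-1}\circ\theta_{\bar x}$ --- just fleshes out what the paper dismisses as the obvious ``if'' direction, so there is no substantive difference in approach.
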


\begin{proof}
The ``if'' the direction is obvious (though Proposition~\ref{prop:lft} is critical when there is at most 3
distinct eigenvalues). 
Now consider two SLOCC equivalent states $\ket{\Psi}$ and $\ket{\Psi'}$,
whose SKCFs are $F$ and $F'$, respectively. Then $F$ and $F'$ must have the same Type 0, 1, 2, blocks, as well
as the same sequence of size signatures $\eta=(\eta^1, \cdots, \eta^r)$, which is ordered. 
Suppose $x$ and $x'$ are the $\eta$-ordered
eigenvalue sequences of $U_\Psi$ and $U_{\Psi'}$, respectively. We must have for some LFT $\theta_0$,
$\theta_0(x')$ is $x$ with eigenvalues of the same size signatures possibly permuted. Thus we have,
\begin{equation}\label{eqn:theta_0}
\omega(\theta_0(x'))=x,
\end{equation}
and in particular, $(x_1, x_2, x_3)$ and $(x'_1, x'_2, x'_3)$ are of the same type.
Also, for any LFT $\theta$,
\begin{equation}\label{eqn:omega}
\omega(\theta\theta_0 (x')) = \omega(\theta(x)).
\end{equation}

We want to show that $\hat x=\hat x'$. 
This holds trivially when $r\le 3$. Suppose now that $r\ge 4$. Let $\Theta=\{\theta_{\bar x}: \bar x\in\bar X\}$
and $\Theta'=\{\theta_{\bar x'}: \bar x'\in\bar X'\}$. We claim that $\Theta'=\Theta\theta_0$.
To see this, fix an $\theta_{\bar x}\in\Theta$ with $\bar x=(x_{i_1}, x_{i_2}, x_{i_3})$. Then
$\theta_0^{-1}(\bar x)=(x'_{j_1}, x'_{j_2}, x'_{j_3})$ are of the same type as $\bar x$, thus of the same
type of $(x_1, x_2, x_3)$, and again the same type of $(x'_1, x'_2, x'_3)$.  Since $\theta_{\bar x}\theta_0(x'_{j_1},
x'_{j_2}, x'_{j_3})=(0, 1, \infty)$, $\theta_{\bar x}\theta_0\in \bar X'$. Thus $\Theta\theta_0\subseteq\Theta'$.
 Similarly, for each $\theta_{\bar x'}\in\Theta'$,
$\theta_{\bar x'}=\theta_{\bar x}\theta_0$, where $\bar x=\theta_0(\bar x')$. Thus $\Theta'\subseteq\Theta\theta_0$.
Consequently, $\Theta'=\Theta\theta_0$ and $\hat X'=\omega(\Theta' x') = \omega(\Theta\theta_0 x') =\omega(\Theta x)=\hat X$.
Thus $\hat x'=\hat x$, and $F=F'$.
\end{proof}

\begin{remark} The restriction of $\bar x$ having the same type as $(x_1, x_2, x_3)$ is not necessary in defining a canonical
form. But without the restriction (i.e.
$\bar x$ can take any triple of distinct eigenvalues from $x$) the computational cost would be higher for some states
as $r(r-1)(r-3)$ LFTs would have to be considered. The worst case complexities, though, are the same (when all size signatures
are the same).
\end{remark}
\begin{remark} The algorithm in \cite{CMS09} can be modified to be the following:first computing the SKCFs of the two input states,
then checking if they are identical. The worst cast complexity remains the same.
\end{remark}

As examples of the SKCF, we now examine the canonical forms of all 26 classes in $2\otimes 3\otimes n$ systems.  This is the largest tripartite dimensions having a finite number of SLOCC equivalence classes.  We also direct the reader to the work of Chen \textit{et al.} \cite{Chen-2006a} for a different derivation of other representatives for the following classes.

\bigskip\noindent
{\bf $2\otimes 2\otimes 2$ Systems}

Here the states are represented as $2\times 2$ pencils.  We first consider the case with no minimal indices.  Here there can only be two or one distinct elementary divisors with the latter having possible signatures of $\{1,1\}$ and $\{2\}$.  In matrix and bra-ket form, these correspond to the unnormalized states
\begin{align*}
&\bigl(\begin{smallmatrix} \lambda&\cdot\\\cdot&\mu+\lambda\end{smallmatrix}\bigr)\text{\scriptsize(ABC-1) ``GHZ-class''}
&&\bigl(\begin{smallmatrix} \lambda&\cdot\\\cdot&\lambda\end{smallmatrix}\bigr)\text{\scriptsize(A:BC-1)} 
&&\bigl(\begin{smallmatrix} \lambda&\mu\\\cdot&\lambda\end{smallmatrix}\bigr)\text{\scriptsize(ABC-2) ``W-class''} \\
&(\ket{0}+\ket{1})\ket{11}+\ket{100} &&\ket{100}+\ket{111}  &&\ket{001}+\ket{100}+\ket{111}.
\end{align*}

The only possible classes included in three qubit systems are those with Bob and Charlie having non-maximal local ranks.  When $h=1,g=0$, the only possibility is $\epsilon_1=1$, while for $h=0,g=1$ it must be $\nu_1=1$.  The case of $h=1,g=1$, there are no non-zero minimal indices.  These three states are given by
\begin{align*}
&\bigl(\begin{smallmatrix}\cdot&\cdot\\\lambda&\mu\end{smallmatrix}\bigr)\text{\scriptsize{(AC:B)}}
&&\bigl(\begin{smallmatrix}\cdot&\lambda\\\cdot&\mu\end{smallmatrix}\bigr)\text{\scriptsize{(AB:C)}}
&&\bigl(\begin{smallmatrix}\cdot&\cdot\\\cdot&\mu\end{smallmatrix}\bigr)\text{\scriptsize{(A:B:C)}}\\
&\ket{011}+\ket{101}&&\ket{011}+\ket{101}&&\hspace{.2cm}\ket{011}.
\end{align*}
We see that (A:B:C) represents the product states while (AC:B) and (AB:C) are the bipartite pure entanglement with respect to the specified partitioning.

\bigskip\noindent
{\bf $2\otimes 2\otimes 3$ Systems}

Since we are only concerned with the states of maximal local ranks for Bob and Charlie, we only consider pencils having $h=g=0$.  The only possible minimal indices are $\epsilon_1=1$ and $\epsilon_1=2$ which correspond to the states
\begin{align*}
&\bigl(\begin{smallmatrix}\lambda&\mu&\cdot\\\cdot&\cdot&\lambda\end{smallmatrix}\bigr)\text{\scriptsize{(ABC-3)}}
&&\bigl(\begin{smallmatrix}\lambda&\mu&\cdot\\\cdot&\lambda&\mu\end{smallmatrix}\bigr)\text{\scriptsize{(ABC-4)}}\\
&\ket{001}+\ket{100}+\ket{112}&\ket{001}&+\ket{012}+\ket{100}+\ket{111}.
\end{align*}
The state (ABC-3) has a single elementary divisor of $\lambda$ while (ABC-4) has none.  The tensor rank of both these states is three.  In fact, an explicit three-term expansion of (ABC-3) is given by $\tfrac{1}{2}\ket{+_{01}}\ket{+_{01}}\ket{+_{12}}+\tfrac{1}{2}\ket{-_{01}}\ket{-_{01}}\ket{-_{12}}+\ket{1}\ket{0}\ket{+_{02}}$ where $\ket{\pm_{ij}}=\ket{i}\pm\ket{j}$.

\bigskip\noindent
{\bf $2\otimes 2\otimes n$ Systems for $n\geq 4$}

As noted in the discussion above, it is enough to consider $2\otimes 2\otimes 4$ systems.  For states with Bob and Charlie having full local ranks, the only possible minimal indices are $\epsilon_1=\epsilon_2=1$ which corresponds to the state
\begin{align*}
&\bigl(\begin{smallmatrix}\lambda&\mu&\cdot&\cdot\\\cdot&\cdot&\lambda&\mu\end{smallmatrix}\bigr)\text{\scriptsize{(ABC-5)}}\\\ket{001}&+\ket{013}+\ket{100}+\ket{112}.
\end{align*}

\bigskip\noindent
{\bf $2\otimes 3\otimes 2$ Systems}

These pencils are simply the transpose of $2\times 3$ pencils and thus contribute two equivalence classes of states with maximal local ranks:
\begin{align*}
&\Bigl(\begin{smallmatrix}\lambda&\cdot\\\mu&\cdot\\\cdot&\lambda\end{smallmatrix}\Bigr)\text{\scriptsize{(ABC-6)}}=\text{\scriptsize{(ABC-3)$^T$}}
&&\Bigl(\begin{smallmatrix}\lambda&\cdot\\\mu&\lambda\\\cdot&\mu\end{smallmatrix}\Bigr)\text{\scriptsize{(ABC-7)}}=\text{\scriptsize{(ABC-4)$^T$}}\\
&\ket{010}+\ket{100}+\ket{121}&\ket{010}&+\ket{021}+\ket{100}+\ket{111}.
\end{align*}

\bigskip\noindent
{\bf $2\otimes 3\otimes 3$ Systems}

Here we have $3\times 3$ pencils and for those having no minimal indices, the possible collections of eigenvalue signatures are
$\{\{1\},\{1\},\{1\}\}$, $\{\{1,1\},\{1\}\}$, $\{\{1,1,1\}\}$, $\{2,1\}$ $\{\{2\},\{1\}\}$, $\{3\}$ and belong to the representative states
\begin{align*}
&\Bigl(\begin{smallmatrix} \lambda&\cdot&\cdot\\\cdot&\mu+\lambda&\cdot\\\cdot&\cdot&\mu\end{smallmatrix}\Bigr)\text{\scriptsize{(ABC-8)}}
&&\Bigl(\begin{smallmatrix} \lambda&\cdot&\cdot\\\cdot&\lambda&\cdot\\\cdot&\cdot&\mu+\lambda\end{smallmatrix}\Bigr)\text{\scriptsize{(ABC-9)}}
&&\Bigl(\begin{smallmatrix} \lambda&\cdot&\cdot\\\cdot&\lambda&\cdot\\\cdot&\cdot&\lambda\end{smallmatrix}\Bigr)\text{\scriptsize{(A:BC-2)}}\\
\ket{100}+&(\ket{0}+\ket{1})\ket{11}+\ket{022}
&\ket{100}&+\ket{111}+\ket{022}
&\ket{100}&+\ket{111}+\ket{122}
\end{align*}
\begin{align*}
&\Bigl(\begin{smallmatrix} \lambda&\mu&\cdot\\\cdot&\lambda&\cdot\\\cdot&\cdot&\lambda\end{smallmatrix}\Bigr)\text{\scriptsize{(ABC-10)}}
&&\Bigl(\begin{smallmatrix} \lambda&\mu&\cdot\\\cdot&\lambda&\cdot\\\cdot&\cdot&\mu+\lambda\end{smallmatrix}\Bigr)\text{\scriptsize{(ABC-11)}}
&&\Bigl(\begin{smallmatrix} \lambda&\mu&\cdot\\\cdot&\lambda&\mu\\\cdot&\cdot&\lambda\end{smallmatrix}\Bigr)\text{\scriptsize{(ABC-12)}}\\
\ket{001}+&\ket{100}+\ket{111}+\ket{122}
&\ket{00&1}+\ket{100}+\ket{111}
&\ket{00&1}+\ket{012}+\ket{100}
\\
&&&+(\ket{0}+\ket{1})\ket{22}
&&+\ket{111}+\ket{122}.
\end{align*}

For $3\times 3$ pencils, the only possible minimal indices are $\epsilon_1=\nu_1=1$ corresponding to the representative state 
\begin{align*}
&\Bigl(\begin{smallmatrix} \lambda&\mu&\cdot\\\cdot&\cdot&\mu\\\cdot&\cdot&\lambda\end{smallmatrix}\Bigr)\text{\scriptsize{(ABC-13)}}\\
\ket{001}+&\ket{012}+\ket{100}+\ket{122}.
\end{align*}

\bigskip\noindent
{\bf $2\otimes 3\otimes 4$ Systems}

For a minimal indice of $\epsilon_1=1$, we have the classes represented by
\begin{align*}
&\Bigl(\begin{smallmatrix}\lambda&\mu&\cdot&\cdot\\\cdot&\cdot&\lambda&\cdot\\\cdot&\cdot&\cdot&\lambda\end{smallmatrix}\Bigr)\text{\scriptsize{(ABC-14)}}
&&\Bigl(\begin{smallmatrix}\lambda&\mu&\cdot&\cdot\\\cdot&\cdot&\lambda&\cdot\\\cdot&\cdot&\cdot&\lambda+\mu\end{smallmatrix}\Bigr)\text{\scriptsize{(ABC-15)}}
&&\Bigl(\begin{smallmatrix}\lambda&\mu&\cdot&\cdot\\\cdot&\cdot&\lambda&\mu\\\cdot&\cdot&\cdot&\lambda\end{smallmatrix}\Bigr)\text{\scriptsize{(ABC-16)}}\\
\ket{001}+\ket{10&0}+\ket{112}+\ket{123}
&\ket{001&}+\ket{100}+\ket{112}
&\ket{001&}+\ket{013}+\ket{100}\\
&&+(&\ket{0}+\ket{1})\ket{23}
&+\ket{1&12}+\ket{123}.
\end{align*}
We also have the states with $\epsilon_1=2$ and $\epsilon_1=3$ respectively:
\begin{align*}
&\Bigl(\begin{smallmatrix}\lambda&\mu&\cdot&\cdot\\\cdot&\lambda&\mu&\cdot\\\cdot&\cdot&\cdot&\lambda\end{smallmatrix}\Bigr)\text{\scriptsize{(ABC-17)}}
&&\Bigl(\begin{smallmatrix}\lambda&\mu&\cdot&\cdot\\\cdot&\lambda&\mu&\cdot\\\cdot&\cdot&\lambda&\mu\end{smallmatrix}\Bigr)\text{\scriptsize{(ABC-18)}}
\\
\ket{001}+\ket{012}&+\ket{100}+\ket{111}+\ket{123}
&\ket{001}+\ket{012}+\ket{&023}+\ket{100}+\ket{111}+\ket{122}.
\end{align*}

\bigskip\noindent
{\bf $2\otimes 3\otimes 5$ Systems}

The possibilities are $\epsilon_1=1,\epsilon_2=1$ and $\epsilon_1=1,\epsilon_2=2$ corresponding to
\begin{align*}
&\Bigl(\begin{smallmatrix}\lambda&\mu&\cdot&\cdot&\cdot\\\cdot&\cdot&\lambda&\mu&\cdot\\\cdot&\cdot&\cdot&\cdot&\lambda\end{smallmatrix}\Bigr)\text{\scriptsize{(ABC-19)}}
&&\Bigl(\begin{smallmatrix}\lambda&\mu&\cdot&\cdot&\cdot\\\cdot&\cdot&\lambda&\mu&\cdot\\\cdot&\cdot&\cdot&\lambda&\mu\end{smallmatrix}\Bigr)\text{\scriptsize{(ABC-20)}}
\\
&\ket{001}+\ket{013}+\ket{100}
&&\ket{001}+\ket{013}+\ket{024}
\\
&+\ket{112}+\ket{124}
&&+\ket{100}+\ket{112}+\ket{123}.
\end{align*}
 
\bigskip\noindent
{\bf $2\otimes 3\otimes n$ Systems for $n\geq 6$}

We must only consider $n=6$ which allows for $\epsilon_1=1,\epsilon_2=1,\epsilon_3=1$ with representative
\begin{align*}
&\Bigl(\begin{smallmatrix} \lambda&\mu&\cdot&\cdot&\cdot\\\cdot&\cdot&\lambda&\mu&\cdot&\cdot\\\cdot&\cdot&\cdot&\cdot&\lambda&\mu\end{smallmatrix}\Bigr)\text{\scriptsize{(ABC-21)}}\\
\ket{001}+\ket{013}&+\ket{025}+\ket{100}+\ket{112}+\ket{124}.
\end{align*}

\bibliographystyle{abbrv}
\bibliography{QuantumBib}

\end{document}